\def\eqref#1{equation~\ref{#1}}
\def\1{\bm{1}}
\DeclareMathAlphabet{\mathsfit}{\encodingdefault}{\sfdefault}{m}{sl}
\SetMathAlphabet{\mathsfit}{bold}{\encodingdefault}{\sfdefault}{bx}{n}
\DeclareMathOperator*{\argmax}{arg\,max}
\DeclareFixedFont{\ttb}{T1}{txtt}{bx}{n}{12}
\DeclareFixedFont{\ttm}{T1}{txtt}{m}{n}{12}
\title{
\bfseries Notes on the runtime of A* sampling
\vspace*{-0.5cm}
}
\author{\begin{tabular}{c}
    \\
    Stratis Markou \\
    Department of Engineering \\
    University of Cambridge \\
    \texttt{em626@cam.ac.uk}
\end{tabular}}
\date{\today}
\newcommand{\Dinf}[2]{D_{\infty}[#1||#2]}
\newcommand{\rmax}{r_{max}}
\DeclarePairedDelimiterX{\innerProd}[2]{\langle}{\rangle}{%
    #1,#2%
}
\newtheorem{theorem}{Theorem}
\newtheorem{lemma}{Lemma}
\newtheorem{definition}{Definition}
\newtheorem{assumption}{Assumption}
\newtheorem{corollary}{Corollary}
\newcommand\eqdef{\stackrel{\tiny \text{def}}{=}}
\newcommand\eqdist{\stackrel{\tiny \text{d}}{=}}
\begin{document}
\maketitle

\begin{abstract}
    \noindent
    The challenge of simulating random variables is a central problem in Statistics and Machine Learning.
    Given a tractable proposal distribution $P$, from which we can draw exact samples, and a target distribution $Q$ which is absolutely continuous with respect to $P$, the A* sampling algorithm \cite{maddison2014sampling} allows simulating exact samples from $Q$, provided we can evaluate the Radon-Nikodym derivative of $Q$ with respect to $P$.
    Maddison et al. \cite{maddison2014sampling} originally showed that for a target distribution $Q$ and proposal distribution $P$, the runtime of A* sampling is upper bounded by $\mathcal{O}\left(\exp(\Dinf{Q}{P})\right)$ where $\Dinf{Q}{P}$ is the Renyi divergence from $Q$ to $P$.
    This runtime can be prohibitively large for many cases of practical interest.
    Here, we show that with additional restrictive assumptions on $Q$ and $P$, we can achieve much faster runtimes.
    Specifically, we show that if $Q$ and $P$ are distributions on $\mathbb{R}$ and their Radon-Nikodym derivative is unimodal, the runtime of A* sampling is $\mathcal{O}\left(\Dinf{Q}{P}\right)$, which is exponentially faster over A* sampling without further assumptions.
\end{abstract}

\section*{Introduction \& Related Work}

\textbf{Introduction:}
We give a proof for the linear runtime of A* sampling \cite{maddison2014sampling} on unimodal density ratios.
We will not specify how A* sampling works, but will define relevant notation.
For an exact specification of A* sampling, we refer the reader to the original paper \cite{maddison2014sampling}.

\textbf{Related work:}
The runtime of A* sampling with unimodal Radon-Nikodym derivatives has been studied in the context of A* coding, a compression algorithm based on A* sampling \cite{flamich2022fast}.
However, due to an inaccuracy in the analysis of the runtime of that algorithm, the proof provided in that work is not correct.
The present work provides a proof for this result.


\section*{Proof of linear runtime for A* sampling}


\begin{definition}[Proposal, Target]
\label{def:qp}
We define $Q$ and $P$ to be a target and a proposal probability measure defined on the Borel sigma algebra of $\mathbb{R}$ with $P \gg Q$.
\end{definition}


\begin{assumption}[Continuous distributions, finite $D_{\infty}$] \label{assumption:runtime_proof}
We assume that measures $Q$ and $P$ describe continuous random variables, so their densities $q$ and $p$ exist.
Since $P \gg Q$, the Radon-Nikodym derivative $r(x) = (dQ/dP)(x)$ also exists.
We also assume $r(x)$ is unimodal and satisfies
\begin{equation}
    \Dinf{Q}{P} = \log \sup_{x\in \mathbb{R}} \frac{dQ}{dP}(x) = \log r_{max} < \infty.
\end{equation}
\end{assumption}

Without loss of generality, we can also assume $P$ to be the uniform measure on $[0, 1]$, as shown by the next lemma.
This is because we can push $P$ and $Q$ through the CDF of $P$ to ensure $P$ is uniform, while leaving the Radon-Nikodym derivative unimodal and the $\infty$-divergence unchanged. \\

\begin{lemma}[Without loss of generality, $P$ is uniform]
Suppose $Q$ is a target measure and $P$ a proposal measure as specified in Assumption \ref{assumption:runtime_proof}.
Let $\Phi$ be the CDF associated with $P$ and consider the measures $P', Q' : [0, 1] \to [0, \infty)$ defined as
\begin{equation}
    P' = P \circ \Phi^{-1} ~\text{ and }~ Q' = Q \circ \Phi^{-1}.
\end{equation}
Then, $P'$ is the uniform measure on $[0, 1]$.
Further, the Radon-Nikodym derivative $dQ'/dP'(x)$ is unimodal, and
\begin{equation}
    \log \sup_{z \in [0, 1]} \frac{dQ'}{dP'}(z) = \log \sup_{x \in \mathbb{R}} \frac{dQ}{dP}(x).
\end{equation}
\end{lemma}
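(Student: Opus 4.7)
The plan is to verify the three claims in order, using standard pushforward / change of variables machinery, with $\Phi^{-1}$ interpreted as the generalised (left-continuous) inverse $\Phi^{-1}(u) = \inf\{x \in \mathbb{R} : \Phi(x) \ge u\}$ whenever $\Phi$ fails to be strictly increasing. The fact that $P$ corresponds to a continuous distribution ensures $\Phi$ is continuous, so $\Phi(\Phi^{-1}(u)) = u$ for all $u \in [0,1]$, even though $\Phi^{-1}(\Phi(x)) = x$ may fail on a $P$-null set (the set where $p$ vanishes). I would collect these facts at the beginning and cite them as the only analytic input.

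First I would prove $P'$ is uniform. By definition of the pushforward, for any $t \in [0,1]$,
\begin{equation*}
P'([0,t]) = P(\Phi^{-1}([0,t])) = P(\{x : \Phi(x) \le t\}) = \Phi(\Phi^{-1}(t)) = t,
\end{equation*}
which identifies $P'$ as Lebesgue measure on $[0,1]$. Next I would derive the Radon-Nikodym derivative. For a Borel $A \subseteq [0,1]$, using $P \gg Q$ and the change-of-variables formula for pushforwards,
\begin{equation*}
Q'(A) = Q(\Phi^{-1}(A)) = \int_{\Phi^{-1}(A)} r(x)\, dP(x) = \int_A r(\Phi^{-1}(z))\, dP'(z),
\end{equation*}
so $(dQ'/dP')(z) = r(\Phi^{-1}(z))$ almost everywhere on $[0,1]$.

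For unimodality, I would use the fact that $\Phi^{-1}$ is non-decreasing on $[0,1]$ and that composing a unimodal function with a monotone one preserves unimodality: if $r$ is non-decreasing on $(-\infty, x^\star]$ and non-increasing on $[x^\star, \infty)$, then $r \circ \Phi^{-1}$ is non-decreasing on $[0, z^\star]$ and non-increasing on $[z^\star, 1]$, where $z^\star = \Phi(x^\star)$. For the supremum, one inclusion is immediate since $\Phi^{-1}([0,1]) \subseteq \mathbb{R}$ gives $\sup_{z \in [0,1]} r(\Phi^{-1}(z)) \le \sup_{x \in \mathbb{R}} r(x)$. For the reverse direction I would argue that any $x \in \mathbb{R}$ with $r(x) > 0$ must lie in the essential support of $P$ (otherwise $Q$ would put mass on a $P$-null set, contradicting $P \gg Q$); on this set $x = \Phi^{-1}(\Phi(x))$ up to a $P$-null modification, so $r(x)$ is attained by $r(\Phi^{-1}(z))$ at $z = \Phi(x)$.

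The main obstacle is the bookkeeping around $\Phi^{-1}$ when $p$ vanishes on some region, since then $\Phi^{-1} \circ \Phi \ne \mathrm{id}$ and the intuition from a strictly increasing CDF must be replaced by a careful null-set argument. Once the identity $\Phi(\Phi^{-1}(u)) = u$ and the pushforward change of variables are in place, the three claims follow without further effort.
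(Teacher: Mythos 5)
Your proof is correct but takes a genuinely different, more careful route than the paper's. The paper passes to densities and applies the scalar change-of-variables formula $p'(z) = p(\Phi^{-1}(z))\,(\Phi^{-1})'(z)$, then takes the quotient; this implicitly requires $\Phi^{-1}$ to be differentiable, i.e.\ that $p$ is strictly positive on its support. You instead work at the level of measures, deriving $dQ'/dP'(z) = r(\Phi^{-1}(z))$ directly from the pushforward change-of-variables identity, which avoids any smoothness assumption on $\Phi^{-1}$ and explicitly addresses the flat-$\Phi$ (vanishing-$p$) case. You are also more careful about the supremum equality: the paper simply takes $\sup$ of both sides of $dQ'/dP'(z) = (dQ/dP)(\Phi^{-1}(z))$ without checking that the range of $\Phi^{-1}$ reaches the points where $r$ is large, whereas you argue that the missed set is $P$-null. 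One minor imprecision in that argument: the claim that any single $x$ with $r(x) > 0$ must lie in the essential support of $P$ does not literally follow from $P \gg Q$, since $r$ is only determined $P$-a.e.\ and a point is $P$-null. The cleaner patch is to read the supremum in $D_\infty$ as an essential supremum with respect to $P$ (as is standard), which is automatically preserved by the $P$-to-$P'$ measure-preserving correspondence; or, alternatively, to invoke unimodality of $r$ to replace the essential supremum by a genuine one on the interior of the support. Either fix closes the gap, and the pushforward skeleton of your argument is sound and, if anything, more robust than the paper's density computation.
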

\begin{proof}
    First, $P'$ is the uniform measure on $[0, 1]$ since for any $z \in [0, 1]$
    \begin{equation}
        P'([0, z]) = P \circ \Phi^{-1}([0, z]) = P((-\infty, \Phi^{-1}(z)]) = [0, z].
    \end{equation}
    Now, let the densities of $Q$ and $P$ be $q$ and $p$, and the densities of $Q'$ and $P'$ be $q'$ and $p'$.
    Then by the change of variables formula
    \begin{equation}
        p'(z) = p\left(\Phi^{-1}(z)\right) (\Phi^{-1})'(z) ~~\text{ and }~~ q'(z) = q\left(\Phi^{-1}(z)\right) (\Phi^{-1})'(z).
    \end{equation}
    Therefore, we have
    \begin{equation}
        \frac{dQ'}{dP'} (z) = \frac{q'(z)}{p'(z)} = \frac{q\left(\Phi^{-1}(z)\right)}{p\left(\Phi^{-1}(z)\right)} = \frac{dQ}{dP} \circ \Phi^{-1}(z),
    \end{equation}
    Now, since $dQ/dP(x)$ is a unimodal function of $x$ and $\Phi^{-1}(z)$ is increasing in $z$, the function $(dQ/dP) \circ \Phi^{-1}(z)$ is unimodal in $z$.
    Also, by taking the the supremum and logarithm of both sizes
    \begin{equation}
        \log \sup_{z \in [0, 1]} \frac{dQ'}{dP'} (z) = \log \sup_{z \in [0, 1]} \frac{dQ}{dP} \circ \Phi^{-1}(z) = \log \sup_{x \in \mathbb{R}} \frac{dQ}{dP}(x),
    \end{equation}
    arriving at the result.
\end{proof}


\begin{definition}[Superlevel set, width]
We define the superlevel-set function $S : [0, 1] \to 2^{[0, 1]}$ as
\begin{equation}
    S(\gamma) = \{x \in [0, 1]~|~ r(x) \geq \gamma \rmax \},
\end{equation}
And let $x_{max} \in \{x \in [0, 1]~|~r(x) \leq r(x_{max})\}$ be an arbitrary maximiser of the density ratio.
We also define the superlevel-set width function $w : [0, 1] \to [0, 1]$ as
\begin{equation}
    w(\gamma) = \inf \{\delta \in [0, 1]~|~\exists z \in [0, 1],~ S(\gamma) \subseteq [z, z + \delta] \}.
\end{equation}
\end{definition}


\begin{lemma}[Properties of $w$] \label{lemma:propw}
The width function $w(\gamma)$ is non-increasing in $\gamma$ and satisfies
\begin{equation}
    \int_0^1 w(\gamma)~d\gamma = \frac{1}{\rmax} \text{ and } w(0) \geq \frac{1}{r_{max}}.
\end{equation}
\end{lemma}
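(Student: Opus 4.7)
The plan is to reduce everything to the observation that unimodality of $r$ forces each superlevel set $S(\gamma)$ to be an interval, so that $w(\gamma)$ is just the Lebesgue measure of $S(\gamma)$. Once this is established, monotonicity of $w$ and the integral identity will both follow from standard measure-theoretic facts.

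First I would show that for every $\gamma \in [0,1]$, the set $S(\gamma)$ is a (possibly empty) interval. Unimodality gives a point $x_{\max}$ such that $r$ is non-decreasing on $[0, x_{\max}]$ and non-increasing on $[x_{\max}, 1]$; the preimage of $[\gamma \rmax, \infty)$ under such a function is readily seen to be of the form $[a_\gamma, b_\gamma]$ (with appropriate closure/openness depending on continuity, which does not affect the infimum defining $w$). Hence the smallest enclosing interval of $S(\gamma)$ is $[a_\gamma, b_\gamma]$ itself, and $w(\gamma) = b_\gamma - a_\gamma$ coincides with the Lebesgue measure $|S(\gamma)|$.

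From here, monotonicity is immediate: if $\gamma_1 \leq \gamma_2$, then $S(\gamma_1) \supseteq S(\gamma_2)$, and since both are intervals, $|S(\gamma_1)| \geq |S(\gamma_2)|$, i.e.\ $w(\gamma_1) \geq w(\gamma_2)$. For the integral identity I would apply the layer-cake formula to $r$ on $[0,1]$, using the fact that $P$ is now uniform and $Q$ is a probability measure, so $\int_0^1 r(x)\, dx = Q([0,1]) = 1$. Writing
\begin{equation}
    1 = \int_0^1 r(x)\,dx = \int_0^{\rmax} |\{x : r(x) \geq t\}|\, dt,
\end{equation}
and changing variables via $t = \gamma \rmax$, the right-hand side becomes $\rmax \int_0^1 |S(\gamma)|\, d\gamma = \rmax \int_0^1 w(\gamma)\, d\gamma$, which rearranges to the claimed identity.

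Finally, the bound $w(0) \geq 1/\rmax$ is a direct consequence: $S(0) = [0,1]$ so $w(0) = 1$, while the normalisation $\int r = 1$ together with $r(x) \leq \rmax$ forces $\rmax \geq 1$, so $1/\rmax \leq 1 = w(0)$. The main obstacle I anticipate is simply being careful about the precise formulation of unimodality (whether strict or weak, whether $r$ is assumed continuous) when proving that $S(\gamma)$ is an interval; once that is pinned down, the rest of the argument is essentially Cavalieri's principle plus monotonicity of measures under inclusion.
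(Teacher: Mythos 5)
Your proposal is correct and follows essentially the same route as the paper: monotonicity from the inclusion $S(\gamma_2) \subseteq S(\gamma_1)$, and the integral identity from a Fubini/layer-cake computation of $\int_0^1 r(x)\,dx = 1$. You are somewhat more careful than the paper in one respect: you explicitly note that unimodality makes each $S(\gamma)$ an interval so that $w(\gamma) = |S(\gamma)|$, a fact the paper's Fubini step uses silently when it identifies $\int_0^1 \mathbbm{1}((x,y)\in B)\,dx$ with $w(y/\rmax)$. Your argument for $w(0) \geq 1/\rmax$ is also slightly more direct (compute $w(0)=1$ and observe $\rmax \geq 1$) than the paper's, which derives it by contradiction from monotonicity together with the integral identity; both are valid.
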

\begin{proof}
    First, we note that if $\gamma_1 \leq \gamma_2$, then $S(\gamma_2) \subseteq S(\gamma_1)$ which implies $w(\gamma_2) \leq w(\gamma_1)$.
    Therefore
    \begin{equation}
         \gamma_1 \leq \gamma_2 \implies w(\gamma_2) \leq w(\gamma_1),
    \end{equation}
    so $w(\gamma)$ is decreasing in $\gamma$.
    Second, let $A = [0, 1] \times [0, \rmax]$, define $B = \left\{ (x, y) \in A ~|~ y \leq r(x) \right\}$ and
    consider the integral
    \begin{equation}
        I = \int_A \mathbbm{1}(z \in B)~dz.
    \end{equation}
    Since this the integrand is a non-negative measurable function, by Fubini's theorem, we have
    \begin{align}
        I = \int_0^1 \int_0^{\rmax} \mathbbm{1}((x, y) \in B)~dy~dx &= \int_0^{\rmax} \int_0^1 \mathbbm{1}((x, y) \in B)~dx~dy \\
        \int_0^1 r(x)~dx &= \int_0^{\rmax} w(y/\rmax)~dy \\
        \int_0^1 q(x)~dx &= \int_0^1 w(\gamma) \rmax~d\gamma \\
        \int_0^1 w(\gamma)~d\gamma &= \rmax^{-1}.
    \end{align}
    Last, since $w$ is non-increasing, we have $w(0) \geq \rmax^{-1}$, because otherwise $\int_0^1 w(\gamma)~ d\gamma < \rmax^{-1}$.
\end{proof}


\begin{definition}[\# steps to $S(\gamma)$, \# residual steps]
Suppose AS* is applied to a target-proposal pair $Q, P$ satisfying Assumption \ref{assumption:runtime_proof}, producing a sequence of samples $X_1, X_2, \dots$.
We use $T \in \mathbb{Z}$ to denote the total number of steps taken by AS* until it terminates and define the random variables
\begin{equation}
    N(\gamma) = \min \{n \in \mathbb{Z}~|~X_n \in S(\gamma)\} ~\text{ and }~ K(\gamma) = \max\{0, T - N(\gamma)\}.
\end{equation}
\end{definition}


\begin{lemma}[A* sampling with a unimodal density ratio follows nested bounds]
    Let $X_1, X_2, \dots$ be the sequence of samples produced by A* sampling, and let $x^* = \argmax_{x \in \mathbb{R}} r(x)$ be a mode of the Radon-Nikodym derivative of $Q$ with respect to $P$.
    Also for $n = 1, 2, \dots$ let
    \begin{equation}
        B^L_{n+1} = \max\{X_k~|~k \leq n, X_k \leq x^*\} ~\text{ and }~ B^R_{n+1} = \min\{X_k~|~k \leq n, X_k \geq x^*\}
    \end{equation}
    and $B_0^L = 0, B_0^R = 1$.
    Also let $B_n = [B^L_n, B^R_n]$.
    Then $B_1, B_2, \dots$ is a sequence of nested intervals, that is $B_1 \supseteq B_2 \supseteq \dots$.
\end{lemma}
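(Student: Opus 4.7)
The plan is to deduce the nesting directly from the definitions, by exhibiting $B^L_n$ as a non-decreasing sequence and $B^R_n$ as a non-increasing sequence, both bounded on the correct side by $x^*$. Since each $B_n$ is the product interval $[B^L_n, B^R_n]$, the inclusion $B_{n+1} \subseteq B_n$ follows from a one-sided comparison on each endpoint.

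First, I would set up convenient notation by letting
\begin{equation}
    L_n = \{X_k : k \leq n,~ X_k \leq x^*\} \quad \text{and} \quad R_n = \{X_k : k \leq n,~ X_k \geq x^*\},
\end{equation}
and interpret the definitions so that $B^L_{n+1} = \max L_n$ and $B^R_{n+1} = \min R_n$ whenever these sets are non-empty, with the fallback $B^L_{n+1} = 0$, $B^R_{n+1} = 1$ inherited from $B_0$ when the respective set is empty (so in particular $B_1 = B_0 = [0,1]$ in the case where no samples have been drawn). This handles the base case and makes the later comparisons well-posed.

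Next I would carry out the inductive step. Going from $n$ to $n+1$ enlarges the index set by at most one sample $X_n$, and we always have $L_n \supseteq L_{n-1}$ and $R_n \supseteq R_{n-1}$. Monotonicity of max/min over increasing sets then gives
\begin{equation}
    B^L_{n+1} = \max L_n \geq \max L_{n-1} = B^L_n \quad \text{and} \quad B^R_{n+1} = \min R_n \leq \min R_{n-1} = B^R_n,
\end{equation}
with equality in the degenerate cases where $X_n$ does not change the relevant extremum or falls on the wrong side of $x^*$. The empty-set fallback is consistent with these inequalities because $0$ is a lower bound for any $\max L_n$ and $1$ is an upper bound for any $\min R_n$, as all samples lie in $[0,1]$ by the preceding uniformisation lemma. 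Finally, $B_n$ is a genuine interval because $B^L_n \leq x^* \leq B^R_n$ by construction, so $B_{n+1} = [B^L_{n+1}, B^R_{n+1}] \subseteq [B^L_n, B^R_n] = B_n$.

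The main obstacle here is essentially just careful bookkeeping: there is no deep fact about A* sampling being invoked, only the combinatorial monotonicity of extrema over growing finite sets together with the convention at empty sets. The unimodality of $r$ enters implicitly through the fact that $x^*$ is a single point chosen as the mode, so that ``left of $x^*$'' and ``right of $x^*$'' partition the real line sensibly; but the nesting itself is a structural consequence of the definition, not of the sampling dynamics. The dynamical content (namely that A* sampling actually draws $X_{n+1}$ from within $B_{n+1}$) would be needed for a runtime bound, but is not required for the nesting statement as written.
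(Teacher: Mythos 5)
Your proof is correct and, importantly, \emph{different in character} from what the paper does: the paper gives no argument at all, simply deferring to Maddison et al.~\cite{maddison2014sampling} for this lemma. You instead supply a self-contained, elementary argument, and your central observation is sharp: with $B^L_{n+1}$ and $B^R_{n+1}$ defined as the max/min over the sets $L_n, R_n$ of samples falling to the left/right of $x^*$, the monotonicity $L_{n-1} \subseteq L_n$ and $R_{n-1} \subseteq R_n$ immediately gives $B^L_n \leq B^L_{n+1}$ and $B^R_n \geq B^R_{n+1}$, and the sandwich $B^L_n \leq x^* \leq B^R_n$ makes each $B_n$ a genuine interval. That is the whole content of the nesting claim as written, and you are right that no property of the A* dynamics is required. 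You also correctly flag the two loose ends in the statement (the undefined $B_1$, and the empty-set convention when all samples so far fall on one side of $x^*$) and patch them in the only sensible way, consistent with the base case $B_0 = [0,1]$ and the uniformisation of $P$ to $[0,1]$.

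The one thing worth making explicit, which you already gesture at in your closing paragraph: the lemma as literally stated is a purely combinatorial fact about any sequence of points in $[0,1]$, and your proof establishes exactly that. The substantive A*-specific content — that the algorithm's internal search bounds on the unimodal case actually coincide with these $B_n$, so that \cref{lemma:3/4} and the later runtime analysis apply — is what the citation to Maddison et al.\ is actually carrying, and your argument does not (and is not meant to) establish it. Since that identification is what the subsequent lemmas rely on, a careful reader would want the paper either to state that identification as a separate hypothesis or to prove it; your proof cleanly separates the trivial part from the non-trivial part, which is a genuine improvement in clarity over the bare citation.
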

\begin{proof}
    This is shown in the original A* sampling paper \cite{maddison2014sampling}.
\end{proof}


\begin{lemma}[Bounds on the expected $P(B_n)$]\label{lemma:3/4}
    Let $B_1, B_2, \dots$ be the bounds produced by A*, and define $Z_n = P(B_n)$.
    Then
    \begin{equation}
        \left(\frac{1}{2}\right)^{n-1} \leq ~\mathbb{E}[Z_n] ~\leq \left(\frac{3}{4}\right)^{n-1}, \text{ for } n = 1, 2, \dots .
    \end{equation}
\end{lemma}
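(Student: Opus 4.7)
By Lemma~1 I may assume $P$ is the uniform measure on $[0,1]$, so that $Z_n = B_n^R - B_n^L$ is simply the Lebesgue length of $B_n$. The strategy is to establish the sharp one-step sandwich
\[
\tfrac{1}{2}\,Z_n \;\leq\; \Exp[Z_{n+1}\mid B_n] \;\leq\; \tfrac{3}{4}\,Z_n
\]
and iterate, using the base value $\Exp[Z_1] = P([0,1]) = 1$.

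\textbf{One-step computation.} Fix $B_n = [B_n^L, B_n^R]$, set $L = Z_n$, $u = x^* - B_n^L$, $v = B_n^R - x^*$, so that $u + v = L$ with $u,v \geq 0$. The key structural fact I would invoke is that, conditional on $B_n$, the next sample that can affect the bounds, $X_{n+1}$, is uniformly distributed on $B_n$. This comes from A* sampling's Poisson-process mechanics together with unimodality: among the inter-sample gaps, the one containing $x^*$ is exactly $B_n$, and since $r$ is unimodal this gap always carries the global upper bound $r_{\max}$, so the next point A* processes is drawn from $P$ restricted to $B_n$, which here is the uniform law on $B_n$. Granting this, if $X_{n+1} \in [B_n^L, x^*]$ the new interval has length $B_n^R - X_{n+1}$, and if $X_{n+1} \in [x^*, B_n^R]$ the new length is $X_{n+1} - B_n^L$. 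A direct integration yields
\[
\Exp[Z_{n+1}\mid B_n] \;=\; \frac{1}{L}\int_{B_n^L}^{x^*}(B_n^R - x)\,dx + \frac{1}{L}\int_{x^*}^{B_n^R}(x - B_n^L)\,dx \;=\; L - \frac{u^2 + v^2}{2L}.
\]

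\textbf{Extremising over $(u,v)$.} Under the constraint $u+v=L$ with $u,v\geq 0$, Cauchy--Schwarz gives the lower bound $u^2+v^2 \geq L^2/2$ (attained at $u=v=L/2$), while the boundary cases $(u,v)\in\{(0,L),(L,0)\}$ give the upper bound $u^2+v^2 \leq L^2$. Substituting back yields $\tfrac{L}{2} \leq \Exp[Z_{n+1}\mid B_n] \leq \tfrac{3L}{4}$. Taking expectations gives $\tfrac{1}{2}\Exp[Z_n] \leq \Exp[Z_{n+1}] \leq \tfrac{3}{4}\Exp[Z_n]$, and a simple induction from $\Exp[Z_1] = 1$ closes both halves of the claim.

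\textbf{Main obstacle.} The one-step integral and the optimisation over $(u,v)$ are routine once set up; the only genuinely subtle step is the uniformity claim for $X_{n+1}\mid B_n$. It is precisely here that unimodality is used, and I would handle it by appealing to the Poisson-process representation of A* sampling: the mode-containing gap realises the global density-ratio upper bound $r_{\max}$, so the priority queue always pops a point from this gap next, and, conditionally on its spatial coordinate being there, it is uniform with respect to the (uniform) proposal.
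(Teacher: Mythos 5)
The paper does not prove this lemma itself; it cites Flamich et al.~\cite{flamich2022fast}, so your proposal is supplying an argument where the paper has none. The route you take is, to my knowledge, the same as in that reference: reduce to $P$ uniform on $[0,1]$, compute the one-step conditional expectation $\mathbb{E}[Z_{n+1}\mid B_n] = L - (u^2+v^2)/(2L)$ with $u+v=L$, extremise $u^2+v^2 \in [L^2/2, L^2]$ to get the sandwich $\frac{1}{2}L \leq \mathbb{E}[Z_{n+1}\mid B_n] \leq \frac{3}{4}L$, and iterate from $\mathbb{E}[Z_1]=1$. The integral and the optimisation over $(u,v)$ are both correct and yield exactly the constants $1/2$ and $3/4$.

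The one place where your reasoning has a real hole is the justification of uniformity. You assert that because the mode-containing gap carries the bound $r_{\max}$, ``the priority queue always pops a point from this gap next.'' For standard A* this does not follow: the priority of a region is its Gumbel value plus its bound, and because Gumbels strictly decrease down the heap, a non-mode sibling created earlier can carry a larger Gumbel and hence a strictly larger priority despite having a smaller bound. When such a sibling is popped, the new split point lies outside $B_n$, so $Z_{n+1} = Z_n$, and the asserted one-step factor of $3/4$ fails for that step. What actually makes the lemma true is that the analysis applies to the depth-limited variant of A* used in Flamich et al.\ (which by construction only ever splits the mode-containing interval), or equivalently that $n$ here indexes only the mode-containing splits. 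In that setting the split point of $B_n$ is drawn from $P$ restricted to $B_n$ by definition, uniformity is immediate, and the rest of your argument closes. The fix is to state that restriction explicitly rather than trying to derive ``always pops the mode gap'' from the bound alone, which is the one implication that does not hold.
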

\begin{proof}
    For a proof of this see \cite{flamich2022fast}.
\end{proof}


\begin{lemma}[Upper bound on the probability of $P(B_n) \geq w(\gamma)$] \label{lemma:zw}
    Let $Z_n = P(B_n)$.
    Then
    \begin{equation}
        \mathbb{P}(Z_n \geq w(\gamma)) \leq \frac{1}{w(\gamma)}\left(\frac{3}{4}\right)^{n-1}
    \end{equation}
\end{lemma}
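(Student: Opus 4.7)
The bound has the structure of a Markov-type tail inequality applied to the non-negative random variable $Z_n = P(B_n)$, combined with the expected-value bound of Lemma~\ref{lemma:3/4}. The plan is essentially a two-line argument.

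First I would observe that $Z_n = P(B_n) \geq 0$ since $P$ is a probability measure, so $Z_n$ is a non-negative random variable and Markov's inequality applies. Provided $w(\gamma) > 0$, Markov gives
\begin{equation}
    \mathbb{P}(Z_n \geq w(\gamma)) \leq \frac{\mathbb{E}[Z_n]}{w(\gamma)}.
\end{equation}

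Then I would invoke the upper bound from Lemma~\ref{lemma:3/4}, namely $\mathbb{E}[Z_n] \leq (3/4)^{n-1}$, and substitute to obtain the claimed inequality. The case $w(\gamma) = 0$ is handled trivially, either by noting that the right-hand side is $+\infty$ (so the bound is vacuously true under the convention $1/0 = \infty$) or by restricting attention to $\gamma$ with $w(\gamma) > 0$, which is the only case where the inequality is informative.

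There is no serious obstacle here: the only thing to check is that Markov's inequality is legitimately applicable, which follows from $Z_n \geq 0$ and $\mathbb{E}[Z_n] < \infty$ (indeed $\mathbb{E}[Z_n] \leq 1$ since $Z_n \leq 1$). The content of the lemma lies entirely in Lemma~\ref{lemma:3/4}, which has already been established; this lemma simply repackages that expectation bound into a tail bound at the natural scale $w(\gamma)$ that will be used in subsequent arguments about how quickly $B_n$ shrinks into the superlevel set $S(\gamma)$.
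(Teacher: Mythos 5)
Your proof is correct and uses the same argument as the paper: apply Markov's inequality to the non-negative random variable $Z_n$ and then substitute the bound $\mathbb{E}[Z_n] \leq (3/4)^{n-1}$ from Lemma~\ref{lemma:3/4}. The additional remarks about the edge case $w(\gamma)=0$ and the finiteness of $\mathbb{E}[Z_n]$ are reasonable housekeeping but not needed beyond what the paper already says.
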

\begin{proof}
    Let $Z_n = P(B_n)$.
    Noting that $Z_n \geq 0$, we apply Markov's inequality and \cref{lemma:3/4} to get
    \begin{equation}
        \mathbb{P}(Z_n \geq w(\gamma)) \leq \frac{1}{w(\gamma)}\mathbb{E}[Z_n] \leq \frac{1}{w(\gamma)}\left(\frac{3}{4}\right)^{n-1},
    \end{equation}
    as required.
\end{proof}


\begin{lemma}[Bound on expected $N(\gamma)$] \label{lemma:n_bound}
The random variable $N(\gamma)$ satisfies
\begin{equation}
    \mathbb{E}[N(\gamma)] \leq \alpha \log \frac{1}{w(\gamma)} + 6, \text{ where } \alpha =  \left(\log\frac{4}{3} \right)^{-1}.
\end{equation}
\end{lemma}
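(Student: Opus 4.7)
The plan is to bound the tail probability $\mathbb{P}(N(\gamma) > n)$ using Lemma \ref{lemma:zw} and then sum over $n$ to bound $\mathbb{E}[N(\gamma)]$. The key observation is that on the event $\{N(\gamma) > n\}$, none of the samples $X_1, \dots, X_n$ lies in $S(\gamma)$, and this forces $S(\gamma) \subseteq B_{n+1}$. Indeed, since $r$ is unimodal, $S(\gamma)$ is an interval containing the mode $x^{*}$. Any $X_k \leq x^{*}$ outside $S(\gamma)$ must then lie to the left of the left endpoint of $S(\gamma)$, so $B^L_{n+1} = \max\{X_k : k \leq n,\, X_k \leq x^{*}\}$ cannot exceed this left endpoint (and if no such sample exists, $B^L_{n+1} = 0$, which is still consistent with $S(\gamma) \subseteq B_{n+1}$). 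By symmetry, $B^R_{n+1}$ lies to the right of the right endpoint of $S(\gamma)$, so indeed $S(\gamma) \subseteq B_{n+1}$.

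Because $P$ is uniform on $[0, 1]$ (by the reduction preceding the superlevel-set definition) and $S(\gamma)$ is an interval of length $w(\gamma)$, the containment gives $Z_{n+1} = P(B_{n+1}) \geq w(\gamma)$. Applying Lemma \ref{lemma:zw} then yields
\begin{equation}
    \mathbb{P}(N(\gamma) > n) \leq \mathbb{P}\bigl(Z_{n+1} \geq w(\gamma)\bigr) \leq \frac{1}{w(\gamma)} \left(\frac{3}{4}\right)^{n}.
\end{equation}
To bound $\mathbb{E}[N(\gamma)] = \sum_{n \geq 0} \mathbb{P}(N(\gamma) > n)$, I would split the sum at $n^{*} = \lceil \alpha \log(1/w(\gamma)) \rceil$, the smallest index for which the right-hand side above drops below $1$. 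For $n < n^{*}$ I use the trivial bound $\mathbb{P}(\cdot) \leq 1$, which contributes at most $\alpha \log(1/w(\gamma)) + 1$; for $n \geq n^{*}$ I use the geometric bound just derived, whose tail sums to a constant at most $4$. Adding the two parts yields the claimed $\alpha \log(1/w(\gamma)) + O(1)$ bound.

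The main obstacle is the geometric step establishing $S(\gamma) \subseteq B_{n+1}$ on $\{N(\gamma) > n\}$. It hinges crucially on unimodality, which guarantees $S(\gamma)$ is a connected set containing $x^{*}$, so that the absence of samples in $S(\gamma)$ really does push both of the nested-interval endpoints outside it; without unimodality the complement of $S(\gamma)$ on each side of $x^{*}$ could be disconnected and the argument would fail. The remaining care is in checking the edge cases where all samples lie on one side of $x^{*}$ (handled by the defaults $B^L = 0$, $B^R = 1$) and where $w(\gamma)$ is close to $1$ (so $n^{*} = 0$ and the logarithmic term vanishes); both are absorbed into the absolute constant.
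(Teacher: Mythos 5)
Your proposal is correct and follows essentially the same route as the paper's own proof: both use unimodality to argue that if no sample has yet landed in $S(\gamma)$ then $S(\gamma)$ is still contained in the current bounding interval (so its $P$-measure is at least $w(\gamma)$), then apply \cref{lemma:zw} and split the tail-sum $\sum_n \mathbb{P}(N(\gamma) > n)$ at a threshold of order $\alpha \log(1/w(\gamma))$. Your version states the containment step ($S(\gamma) \subseteq B_{n+1}$ on $\{N(\gamma) > n\}$) with slightly cleaner indexing than the paper's $P(B_n) < w(\gamma) \implies N(\gamma) \leq n$, but the two are contrapositives of one another up to a harmless shift by one, and the resulting constant is within the stated $+6$.
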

\begin{proof}
    Let $N_0 = \left\lceil \frac{\log w(\gamma)}{\log (3/4)} \right\rceil$ + 1.
    Also let $B_0, B_1, \dots$ be the bounds produced by AS*.
    Noting that by the unimodality of $r$, $S(\gamma)$ is an interval with $x_{max} \in S(\gamma)$, and $x_{max} \in B_n$, we have
    \begin{equation}
        P(B_n) < w(\gamma) \implies N(\gamma) \leq n,
    \end{equation}
    that is, the event $P(B_n) < w(\gamma)$ implies the event $N(\gamma) \leq n$.
    From this it follows that
    \begin{equation}
        \mathbb{P}(P(B_n) < w(\gamma)) \leq \mathbb{P}(N(\gamma) \leq n) \implies \mathbb{P}(P(B_n) \geq w(\gamma)) \geq \mathbb{P}(N(\gamma) \geq n + 1). \label{eq:b_less_than_w}
    \end{equation}
    Using this together with \cref{lemma:zw}, we can write
    \begin{align}
        \mathbb{E}_{B_{0:\infty}}[N(\gamma)] &= 
        \sum_{n=1}^\infty \mathbb{P}\left(N(\gamma)=n\right)~n \label{eq:ineq:n1} \\
        &= \sum_{n=1}^\infty \mathbb{P}\left(N(\gamma) \geq n\right) \label{eq:ineq:n2} \\
        &\leq N_0 + \sum_{n=N_0 + 1}^\infty \mathbb{P}\left(N(\gamma) \geq n\right) \label{eq:ineq:n3} \\
        &= N_0 + \sum_{n=1}^\infty \mathbb{P}\left(N(\gamma) \geq N_0 + n\right) \label{eq:ineq:n4} \\
        &\leq N_0 + \sum_{n=1}^\infty \mathbb{P}(B_{N_0+n-1} \geq w(\gamma)) \label{eq:ineq:n5} \\
        &\leq N_0 + \sum_{n=1}^\infty \frac{1}{w(\gamma)}\left(\frac{3}{4}\right)^{N_0+n-1} \label{eq:ineq:n6} \\
        &\leq N_0 + \sum_{n=1}^\infty \left(\frac{3}{4}\right)^n \label{eq:ineq:n7} \\
        &= N_0 + 4 \label{eq:ineq:n8} \\
        &\leq \frac{\log w(\gamma)}{\log 3/4} + 6,
    \end{align}
    where the equality of \ref{eq:ineq:n1} and \ref{eq:ineq:n2} is a standard identity \cite{grimmett2014probability}, \ref{eq:ineq:n2} to \ref{eq:ineq:n3} follows by the fact that probabilities are smaller than 1, \ref{eq:ineq:n3} to \ref{eq:ineq:n4} follows by relabelling the indices, \ref{eq:ineq:n4} to \ref{eq:ineq:n5} follows from \cref{eq:b_less_than_w}, \ref{eq:ineq:n5} to \ref{eq:ineq:n6} follows from follows from \cref{lemma:zw} and \ref{eq:ineq:n5} to \ref{eq:ineq:n6} follows from our definition of $N_0$.
\end{proof}


\begin{lemma}[Exponentials and Truncated Gumbels] \label{lemma:expg}
Let $T \sim \text{Exp}(\lambda)$ and $T_0 \geq 0$.
Then
\begin{equation}
    Z \eqdef - \log (T + T_0) \eqdist G ~\text{ where }~ G \sim \text{TG}(\log \lambda,  -  \log T_0).
\end{equation}
\end{lemma}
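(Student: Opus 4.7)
The plan is to prove the distributional identity directly by computing the CDF of $Z = -\log(T + T_0)$ and matching it to the CDF of a truncated Gumbel with the claimed parameters. Since $T \geq 0$ almost surely, we have $T + T_0 \geq T_0$, which immediately gives the support constraint $Z \leq -\log T_0$. This support constraint is exactly the truncation point $-\log T_0$ that appears in the $\text{TG}$ parameters, so the structure of the answer is forced and we just need to verify the shape.

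First I would fix some $z \leq -\log T_0$ and write
\begin{equation}
    \Prob(Z \leq z) = \Prob\!\left(-\log(T + T_0) \leq z\right) = \Prob\!\left(T \geq e^{-z} - T_0\right).
\end{equation}
Because $z \leq -\log T_0$ guarantees $e^{-z} - T_0 \geq 0$, I can apply the exponential survival function to obtain
\begin{equation}
    \Prob(Z \leq z) = \exp\!\left(-\lambda(e^{-z} - T_0)\right) = \exp(\lambda T_0)\exp\!\left(-\lambda e^{-z}\right).
\end{equation}

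Next I would recall that a Gumbel with location $\mu = \log \lambda$ has CDF $\exp(-e^{-(z-\mu)}) = \exp(-\lambda e^{-z})$, and that the truncation of this Gumbel to the region $z \leq b$ with $b = -\log T_0$ has CDF equal to $\exp(-\lambda e^{-z})/\exp(-\lambda T_0)$ on that region. A direct comparison shows this matches the expression derived above, which proves that $Z$ and $G$ have the same CDF and hence the same law.

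The argument has essentially no obstacle beyond being careful with the conventions: the main point to watch is to verify that the support constraint $z \leq -\log T_0$ and the normalisation constant $\exp(\lambda T_0)$ in the CDF coincide exactly with those produced by truncating $\text{Gumbel}(\log \lambda)$ from above at $-\log T_0$, so that the identification $G \sim \text{TG}(\log \lambda, -\log T_0)$ as used elsewhere in the paper is unambiguous.
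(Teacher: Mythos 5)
Your proof is correct but takes a genuinely different route from the paper. The paper works with densities: it applies the change-of-variables formula to get $p_Z(z) \propto e^{-z - e^{-(z-\kappa)}}$ with $\kappa = \log\lambda$ and then identifies the distribution up to proportionality, implicitly relying on the truncation constant taking care of normalisation. You instead work with CDFs: you compute $\Prob(Z \le z) = \Prob(T \ge e^{-z} - T_0) = e^{\lambda T_0}e^{-\lambda e^{-z}}$ for $z \le -\log T_0$ via the exponential survival function, and match this against the truncated-Gumbel CDF $e^{-\lambda e^{-z}}/e^{-\lambda T_0}$. The CDF route has the advantage of making the normalisation constant $e^{\lambda T_0}$ explicit rather than leaving it absorbed in a proportionality sign, so it pins down the truncation point $-\log T_0$ without any extra argument; the paper's density route is a bit shorter once one is willing to identify a density from its shape and support. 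Both are standard and fully rigorous here, and your version arguably closes a small gap the paper leaves implicit.
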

\begin{proof}
    Let $T \sim \text{Exp}(\lambda)$, $T_0 \geq 0$ and define
    \begin{equation}
        Z \eqdef - \log (T + T_0).
    \end{equation}
    We note that $Z \leq - \log T_0$.
    For $Z \leq - \log T_0$, we can apply the change of variables formula to obtain the density of $Z$.
    Let $p_Z$ and $p_T$ be the densities of $Z$ and $T$. 
    Then
    \begin{align}
        p_Z(z) &= p_T(t) \left| \frac{dt}{dz}\large \right| \\
               &= \lambda e^{-\lambda t} \left| \frac{d}{dz} (e^{-z} - T_0) \right| \\
               &\propto e^{-\lambda e^{-z}} e^{-z} \\
               &\propto e^{-z - e^{-(z - \kappa)}},
    \end{align}
    where $\kappa = \log \lambda$.
    Therefore $Z$ has distribution $\text{TG}(\log \lambda, - \log T_0)$.
\end{proof}


\begin{lemma}[Mean of exponentiated negative truncated Gumbel] \label{lemma:mean_neg_g}
Let $B_1, \dots, B_N$ be the first $N$ bounds produced by AS*, let $G_N$ be the $N^{th}$ Gumbel produced by AS* and define $E_N = e^{-G_N}$.
Then
\begin{equation}
    \mathbb{E}[E_N~|~B_{1:N}] = \sum_{n=1}^N\frac{1}{P(B_n)}.
\end{equation}
\end{lemma}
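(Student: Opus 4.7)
The strategy is to leverage the top-down Gumbel process construction underlying A* sampling to express $G_n$ inductively from $G_{n-1}$ plus an independent exponential, then apply linearity of expectation after telescoping.

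First, I would recall the distributional structure of the Gumbels produced by A* sampling. By construction, the Gumbel sequence $G_1, G_2, \dots$ is strictly decreasing, and at step $n$ the value $G_n$ is the maximum of a Gumbel process on the bound $B_n$ (which has total mass $P(B_n)$) conditional on being less than the previous Gumbel $G_{n-1}$. Equivalently, conditional on $B_{1:n}$ and $G_{n-1}$,
\begin{equation}
    G_n \sim \text{TG}\bigl(\log P(B_n),\, G_{n-1}\bigr),
\end{equation}
where TG is the truncated Gumbel as used in Lemma \ref{lemma:expg}, with the convention that $G_0 = +\infty$ (no truncation at the root).

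Second, I would apply Lemma \ref{lemma:expg} in reverse: if $G_n \sim \text{TG}(\log P(B_n), G_{n-1})$, then there exists $T_n \sim \text{Exp}(P(B_n))$, independent of the past given $B_{1:n}$, such that $G_n = -\log(T_n + e^{-G_{n-1}})$. Exponentiating and negating,
\begin{equation}
    E_n = e^{-G_n} = T_n + e^{-G_{n-1}} = T_n + E_{n-1}.
\end{equation}
Telescoping from $E_0 = e^{-\infty} = 0$ gives $E_N = \sum_{n=1}^N T_n$.

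Third, taking the conditional expectation given $B_{1:N}$, and using that $T_n \mid B_{1:N} \sim \text{Exp}(P(B_n))$ so that $\mathbb{E}[T_n \mid B_{1:N}] = 1/P(B_n)$, linearity yields
\begin{equation}
    \mathbb{E}[E_N \mid B_{1:N}] = \sum_{n=1}^N \mathbb{E}[T_n \mid B_{1:N}] = \sum_{n=1}^N \frac{1}{P(B_n)}.
\end{equation}

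The main obstacle is the first step: justifying that $G_n$ is truncated Gumbel with location $\log P(B_n)$ and truncation level $G_{n-1}$ conditional on $B_{1:n}$. This is a property inherited from the top-down Gumbel process construction of Maddison et al.\ rather than something derivable from the setup in this note, so the cleanest approach is to cite it and then let Lemma \ref{lemma:expg} do the rest of the work mechanically.
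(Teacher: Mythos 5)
Your proof is correct and follows essentially the same route as the paper's: characterize the conditional law of $G_n$ given $G_{n-1}$ and $B_{1:n}$ as a truncated Gumbel, convert this to an exponential increment $E_n = \tau_n + E_{n-1}$ via Lemma~\ref{lemma:expg}, telescope from $E_0 = 0$, and take conditional expectations. Your write-up is in fact slightly cleaner: the paper's proof inexplicably invokes Jensen's inequality at the step where it applies Lemma~\ref{lemma:expg}, but as you correctly observe the relation $E_n = \tau_n + E_{n-1}$ is an exact distributional equality, so no inequality is needed to reach the claimed identity.
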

\begin{proof}
    Define $E_n = e^{-G_n}$ for $n=1, 2, \dots$.
    By the definition of AS*, we have
    \begin{equation}
        G_N ~|~ G_{N-1}, B_{1:N} \sim \text{TG}(\log P(B_N), G_{N-1}).
    \end{equation}
    Negating $G_N$ and $G_{N-1}$, taking exponentials and applying Jensen's inequality together with ineq. (\ref{lemma:expg}), we obtain
    \begin{equation}
        E_N ~|~ T_{N-1}, B_{0:N} \eqdist \tau_N + T_{N-1}, \text{ where } \tau_{N-1} \sim \text{Exp}(P(B_N)).
    \end{equation}
    Repeating this step and taking expectations, we have
    \begin{equation}
        \mathbb{E}[E_N ~|~ B_{0:N}] = \mathbb{E}\left[\sum_{n=1}^N \tau_n ~\big|~ B_{0:N}\right] = \sum_{n=1}^N \frac{1}{P(B_n)}
    \end{equation}
    as required.
\end{proof}


\begin{lemma}[Bound on expected $K(\gamma)$] \label{lemma:k_bound}
    The random variable $K(\gamma)$ satisfies
    \begin{equation}
        \mathbb{E}[K(\gamma)] \leq \alpha\left(\log \frac{1}{\gamma} + \log \frac{1}{w(\gamma)}\right) + 16, \text{ where } \alpha = \left(\log \frac{4}{3}\right)^{-1}.
    \end{equation}
\end{lemma}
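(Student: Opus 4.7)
The plan is to adapt the tail-sum argument of Lemma \ref{lemma:n_bound} to $K(\gamma)$ by translating the termination condition into a bound on the Gumbel values. I would first characterize the step at which A* must have terminated. Once $X_{N(\gamma)} \in S(\gamma)$, the best lower bound on the Gumbel-score is at least $G_{N(\gamma)} + \log(\gamma \rmax)$, while the upper bound on any unexplored region at step $n$ is at most $G_n + \log \rmax$. The algorithm is therefore guaranteed to have terminated by the first step $n \geq N(\gamma)$ with $G_n \leq G_{N(\gamma)} + \log \gamma$, which gives the deterministic bound $K(\gamma) \leq \tilde T - N(\gamma)$ for $\tilde T := \min\{n \geq N(\gamma) : e^{-G_n} \geq e^{-G_{N(\gamma)}}/\gamma\}$.

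Next, I would invoke Lemma \ref{lemma:expg} together with the construction from the proof of Lemma \ref{lemma:mean_neg_g} to represent
\[
e^{-G_{N(\gamma)+k}} = e^{-G_{N(\gamma)}} + \sum_{j=1}^k \tau_j, \qquad \tau_j \mid B_{0:N(\gamma)+j} \sim \mathrm{Exp}(P(B_{N(\gamma)+j})).
\]
The event $\tilde T - N(\gamma) > k$ is then equivalent to $\sum_{j=1}^k \tau_j < e^{-G_{N(\gamma)}}(1/\gamma - 1)$, a concrete statement about a partial sum of conditionally exponential random variables that I can attack with Markov-style inequalities.

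Third, I would mirror the tail-sum computation of Lemma \ref{lemma:n_bound}, splitting the sum at a cutoff $N_0 \approx \alpha \log \tfrac{1}{\gamma w(\gamma)}$. Two mechanisms drive the tail to zero past the cutoff: (i) by Lemma \ref{lemma:zw}, within roughly $\alpha \log \tfrac{1}{w(\gamma)}$ steps of $N(\gamma)$ the quantity $P(B_n)$ falls below $w(\gamma)$, so from that point on every $\tau_j$ is stochastically at least $\mathrm{Exp}(w(\gamma))$; (ii) once this holds, $\sum_{j=1}^k \tau_j$ has conditional mean at least $k/w(\gamma)$, and by Lemma \ref{lemma:mean_neg_g} applied to $e^{-G_{N(\gamma)}}$ the threshold $e^{-G_{N(\gamma)}}/\gamma$ is of order $1/(\gamma\, w(\gamma))$ in expectation, so a further $\alpha \log \tfrac{1}{\gamma}$ steps suffice for the partial sum to cross the threshold with high probability. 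Summing the tail probabilities geometrically past $N_0$, exactly as in \eqref{eq:ineq:n1}--\eqref{eq:ineq:n8}, would yield the claimed bound.

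I expect the main obstacle to be cleanly coupling the two phases without requiring joint tail bounds. In particular, $e^{-G_{N(\gamma)}}$ is random and correlated with the subsequent $\tau_j$ through the shared bounds $B_{0:N(\gamma)}$, so a Markov-only argument on its unconditional mean is insufficient. The natural remedy is to condition on $B_{0:N(\gamma)+k}$ throughout, push the randomness of $G_{N(\gamma)}$ into the conditional distribution of the partial sum, and only then take expectations. Keeping the resulting bound additive and affine in both $\log \tfrac{1}{\gamma}$ and $\log \tfrac{1}{w(\gamma)} $, rather than multiplicative, will require a careful choice of cutoff $N_0$; the accumulation of several small additive constants across the two stages is almost certainly what gives rise to the $+16$ in the stated bound.
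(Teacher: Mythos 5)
Your high-level reduction matches the paper's: translate termination into the condition $G_{N(\gamma)+k} - G_{N(\gamma)} \leq \log\gamma$, use the exponential representation of Gumbel increments (Lemma~\ref{lemma:expg}, Lemma~\ref{lemma:mean_neg_g}), and then run a tail-sum as in Lemma~\ref{lemma:n_bound}. You also correctly identify the central difficulty: the threshold $e^{-G_{N(\gamma)}}(1/\gamma - 1)$ is random and correlated with the $\tau_j$'s through the shared bounds. However, the way you propose to get past that difficulty has a genuine gap.

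The plan to ``attack with Markov-style inequalities'' is pointed the wrong way. You need to upper-bound $\mathbb{P}\bigl(\sum_{j=1}^k \tau_j < \text{threshold}\bigr)$, i.e.\ the probability that a nonnegative sum is \emph{small}, and Markov's inequality gives upper bounds only on $\mathbb{P}(X \geq a)$. No reverse-Markov / Paley--Zygmund style argument appears in your sketch, and your ``mean at least $k/w(\gamma)$ versus threshold of order $1/(\gamma w(\gamma))$'' comparison concerns first moments only, which does not control a lower-tail probability. The paper's actual moves are: (a)~drop $\sum_{j<k}\tau_j$ and use only the last step, which amounts to lower-bounding the conditional probability by the \emph{untruncated} Gumbel CDF, yielding $\mathbb{P}(\cdot\,|\,N(\gamma),G_{N(\gamma)},B) \geq \exp\!\bigl(-\tfrac{1}{\gamma} P(B_{N(\gamma)+n}) e^{-G_{N(\gamma)}}\bigr)$; and then (b)~apply Jensen's inequality twice, first to push the expectation over $(G_{N(\gamma)},B)$ into the exponent, and again to push the expectation over $N(\gamma)$ inside. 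Neither Jensen step appears in your proposal, and ``condition on $B$, then take expectations'' does not by itself substitute for them, because the conditional lower bound is $e^{-x}$ in a random $x$ and it is precisely the convexity of $e^{-x}$ that makes the interchange legitimate.

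The second missing ingredient is the cancellation that keeps the bound additive in $\log(1/\gamma)$ and $\log(1/w(\gamma))$. Your sketch estimates $\mathbb{E}[e^{-G_{N(\gamma)}}/\gamma]$ and the decay rate of $P(B_n)$ separately, which naively produces a product of two random, correlated factors. The paper instead exploits the identity
\[
P\bigl(B_{N(\gamma)-1}\bigr)\sum_{n=0}^{N(\gamma)-1}\frac{1}{P(B_n)} \;\leq\; N(\gamma),
\]
valid because $P(B_n)$ is nonincreasing, so that the problematic $\mathbb{E}[e^{-G_{N(\gamma)}}]$ factor collapses to $N(\gamma)$ rather than to $1/w(\gamma)$. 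The $\log(1/w(\gamma))$ term then enters only once, through $\mathbb{E}[N(\gamma)] \leq N_0 + 4$ from Lemma~\ref{lemma:n_bound} (applied under yet another Jensen step), not through a separate ``phase (i) wait for $P(B_n) < w(\gamma)$'' argument. Your two-phase decomposition double-counts the $\log(1/w(\gamma))$ contribution and obscures where the additivity comes from. Without the single-step/Jensen machinery and the $P(B_{N(\gamma)-1})\sum 1/P(B_n) \leq N(\gamma)$ cancellation, I do not see how your plan closes.
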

\begin{proof}
    Let the global upper and lower bounds of AS* at step $n$ be $U_n$ and $L_n$ respectively.
    Then, by the definition of the upper bound of AS* coding
    \begin{equation}
        U_{N(\gamma)} = \log r_{max} + G_{N(\gamma)},
    \end{equation}
    and also, by the definition of the lower bound of AS* coding
    \begin{equation}
        L_{N(\gamma)} = \max_{n \in [1:N(\gamma)]}\big\{\log r(x_n) + G_n \big\} \geq \log r\left(x_{N(\gamma)}\right) + G_{N(\gamma)} \geq \log \gamma r_{max} + G_{N(\gamma)}.
    \end{equation}
    Now for $k = 0, 1, 2, \dots$, we have
    \begin{equation}
        U_{N(\gamma) + k} - L_{N(\gamma) + k} \leq 0 ~\implies~ T \leq N(\gamma) + k,
    \end{equation}
    that is, the event $U_{N(\gamma) + k} - L_{N(\gamma) + k} \leq 0$ implies the event $T \leq N(\gamma) + k$.
    This is because if $U_{N(\gamma) + k} - L_{N(\gamma) + k} \leq 0$, then the algorithm has terminated by step $N(\gamma) + k$, so it follows that $T \leq N(\gamma) + k$.
    Further
    \begin{align}
        U_{N(\gamma) + k} - L_{N(\gamma) + k} &\leq U_{N(\gamma) + k} - L_{N(\gamma)} \\
        &\leq \log r_{max} + G_{N(\gamma)+k} - \log \gamma r_{max} - G_{N(\gamma)} \\
        &= \log \frac{1}{\gamma} + G_{N(\gamma)+k} -  G_{N(\gamma)}.
    \end{align}
    Therefore, we have
    \begin{equation}
        G_{N(\gamma)+k} -  G_{N(\gamma)} \leq \log \gamma \implies T \leq N(\gamma) + k \implies K(\gamma) \leq k,
    \end{equation}
    that is, the event $G_{N(\gamma)+k} -  G_{N(\gamma)} \leq \log \gamma$ implies the event $K(\gamma) \leq k$.
    This holds because if $G_{N(\gamma)+k} -  G_{N(\gamma)} \leq \log \gamma$, then $U_{N(\gamma) + k} - L_{N(\gamma) + k} \leq 0$, which in turn implies $K(\gamma) \leq k$.
    Therefore
    \begin{equation} \label{eq:gk}
        \mathbb{P}\left(G_{N(\gamma)+k} -  G_{N(\gamma)} \leq \log \gamma \right) \leq \mathbb{P}\left(K(\gamma) \leq k \right) \implies
        \mathbb{P}\left(G_{N(\gamma)+k} -  G_{N(\gamma)} \geq \log \gamma \right) \geq \mathbb{P}\left(K(\gamma) \geq k+1\right).
    \end{equation}
    \Cref{eq:gk} upper bounds the probability that the second stage of the algorithm has not terminated, by the probability that the Gumbel values have decreased sufficiently.
    To proceed, we turn to lower bounding the probability of the complementary event $G_{N(\gamma)+k} -  G_{N(\gamma)} \leq \log \gamma$.
    Let $\Phi_{TG}(g; \mu, \kappa)$ denote the CDF of a truncated Gumbel distribution with location parameter $\mu$ and unit scale parameter, truncated at $\kappa$.
    Then
    {
    }
    \begin{align}
        \mathbb{P}\big( &G_{N(\gamma)+n} - G_{N(\gamma)} \leq \log \gamma ~|~ N(\gamma), G_{N(\gamma)}, B_{0:N(\gamma)+n} \big) = \\
        &= \mathbb{E}_{G_{N(\gamma)+n-1}}\left[\mathbb{P}\big( G_{N(\gamma)+n} - G_{N(\gamma)} \leq \log \gamma ~|~N(\gamma), G_{N(\gamma)}, G_{N(\gamma)+n-1}, B_{0:N(\gamma)+n} \big)\right] \\
        &= \mathbb{E}_{G_{N(\gamma)+n-1}}\left[\Phi_{TG}\left(\log \gamma + G_{N(\gamma)};~ \log P(B_{N(\gamma)+n}),~ G_{N(\gamma)+n-1} \right)\right] \\
        &\geq \Phi_{TG}\left(\log \gamma + G_{N(\gamma)};~ \log P(B_{N(\gamma)+n}),~ \infty \right) \\
        &= e^{-e^{- \left(\log \gamma + G_{N(\gamma)} - \log P\left(B_{N(\gamma)+n}\right)\right)}} \\
        &= e^{-\frac{1}{\gamma}~ P\left(B_{N(\gamma)+n}\right)~ e^{-G_{N(\gamma)}}}.
    \end{align}
    Taking an expectation over $G_{N(\gamma)}$ and $ B_{0:N(\gamma)+n}$, we have
    \begin{align}
        \mathbb{P}\big( G_{N(\gamma)+n} - G_{N(\gamma)} \leq \log \gamma ~|~ N(\gamma) \big) &\geq \mathbb{E}_{G_{N(\gamma)}, B_{0:N(\gamma)+n}}\left[ e^{-\frac{1}{\gamma}~ P\left(B_{N(\gamma)+n}\right)~ e^{-G_{N(\gamma)}}} ~\Big|~ N(\gamma) \right] \\
        &\geq e^{-\frac{1}{\gamma}~ \mathbb{E}_{G_{N(\gamma)}, B_{0:N(\gamma)+n}}\left[P\left(B_{N(\gamma)+n}\right)~ e^{-G_{N(\gamma)}}~|~ N(\gamma) \right]} \label{eq:K1}
    \end{align}
    
    Focusing on the term in the exponent, we have
    \begin{align}
        &\mathbb{E}_{G_{N(\gamma)}, B_{0:N(\gamma)+n}}\left[P\left(B_{N(\gamma)+n}\right)~ e^{-G_{N(\gamma)}}~\big|~ N(\gamma) \right] = \\
        &= \mathbb{E}_{B_{0:N(\gamma)-1}}\left[\mathbb{E}_{G_{N(\gamma)}, B_{N(\gamma):N(\gamma)+n}}\left[ P\left(B_{N(\gamma)+n}\right)~ e^{-G_{N(\gamma)}} ~\big|~ B_{0:N(\gamma)-1}, N(\gamma) \right]~\Big|~ N(\gamma) ~\right] \\
        &\leq \mathbb{E}_{B_{0:N(\gamma)-1}}\left[\mathbb{E}_{G_{N(\gamma)}}\left[ \left(\frac{3}{4}\right)^{n+1} P\left(B_{N(\gamma)-1}\right)~ e^{-G_{N(\gamma)}} ~\bigg|~ B_{0:N(\gamma)-1} \right]~\bigg|~ N(\gamma) ~\right] \\
        &= \mathbb{E}_{B_{0:N(\gamma)-1}}\left[ \left(\frac{3}{4}\right)^{n+1} P\left(B_{N(\gamma)-1}\right) \sum_{n=0}^{N(\gamma)-1} \frac{1}{P(B_n)} ~\bigg|~ N(\gamma) ~\right] \\
        &\leq \left(\frac{3}{4}\right)^{n+1} N(\gamma). \label{eq:K2}
    \end{align}
    Substituting \cref{eq:K2} into \cref{eq:K1}, we obtain
    \begin{equation}
        \mathbb{P}\big( G_{N(\gamma)+n} - G_{N(\gamma)} \leq \log \gamma ~|~ N(\gamma) \big) \geq e^{-\frac{N(\gamma)}{\gamma}~ \left(\frac{3}{4}\right)^{n+1}}
    \end{equation}
    and applying ineq. (\ref{eq:ineq:n8}) to this we obtain
    \begin{equation} \label{eq:ineq:gkn}
        \mathbb{P}\big( G_{N(\gamma)+n} - G_{N(\gamma)} \leq \log \gamma \big) \geq e^{-\frac{1}{\gamma}~ \left(\frac{3}{4}\right)^{n+1}~\left(N_0 + 4 \right)},
    \end{equation}
    arriving at a lower bound on which does not depend on any random quantities.
    Now we also have
    \begin{align}
        \log \frac{1}{\gamma} + \log(N_0 + 4) &= \log \frac{1}{\gamma} + \log\left(\left\lceil \frac{\log w(\gamma)}{\log (3/4)} \right\rceil + 5\right) \\
        &\leq \log \frac{1}{\gamma} + \log\left( \frac{\log w(\gamma)}{\log (3/4)} + 6\right) \label{eq:ineq:loglog} \\
        &\leq \log \frac{1}{\gamma} + \log \frac{1}{w(\gamma)} + 2, \label{eq:ineq:singlelog}
    \end{align}
    where going from \ref{eq:ineq:loglog} to \ref{eq:ineq:singlelog} can be verified numerically.
    Therefore, letting $K_0 = \left\lceil \frac{\log (1/\gamma)~+~\log (1/w(\gamma))~+~2}{\log (4/3)} \right\rceil$
    \begin{align}
        \mathbb{E}\left[K(\gamma)\right] &= \sum_{k=0}^\infty \mathbb{P}\big( K(\gamma) = k \big) ~k \label{eq:ineq:k1} \\
        &= \sum_{k=0}^\infty \mathbb{P}\big( K(\gamma) \geq k \big)  \label{eq:ineq:k2} \\
        &\leq K_0  + \sum_{k=K_0 + 1}^\infty \mathbb{P}\big( K(\gamma) \geq  k \big)  \label{eq:ineq:k3} \\
        &= K_0 + \sum_{k=1}^\infty \mathbb{P}\big( K(\gamma) \geq K_0 + k \big) \label{eq:ineq:k4} \\
        &\leq K_0 + \sum_{k=1}^\infty \mathbb{P}\big( G_{N(\gamma)+K_0+k-1} - G_{N(\gamma)} > \log \gamma \big) \label{eq:ineq:k5} \\
        &\leq K_0 + \sum_{k=1}^\infty \left(1 - e^{-\left(\frac{3}{4}\right)^k}\right) \label{eq:ineq:k6} \\
        &\leq K_0 + 4 \label{eq:ineq:k7} \\
        &\leq \frac{\log \gamma}{\log (3/4)} + \frac{\log w(\gamma)}{\log (3/4)} + 16. \label{eq:ineq:k8}
    \end{align}
    where the equality of \ref{eq:ineq:k1} and \ref{eq:ineq:k2} is a standard identity \cite{grimmett2014probability}, \ref{eq:ineq:k2} to \ref{eq:ineq:k3} follows because probabilities are bounded above by $1$, \ref{eq:ineq:k3} to \ref{eq:ineq:k4} follows by relabelling the indices, \ref{eq:ineq:k4} to \ref{eq:ineq:k5} follows by ineq. (\ref{eq:gk}), \ref{eq:ineq:k5} to \ref{eq:ineq:k6} follows by ineq. (\ref{eq:ineq:gkn}) and the definition of $K_0$, \ref{eq:ineq:k6} to \ref{eq:ineq:k7} can be verified by evaluating the sum using numerical means and \ref{eq:ineq:k7} to \ref{eq:ineq:k8} follows by the definition of $K_0$.
\end{proof}


\begin{corollary}[Upper bound on $T$ for given $w$] \label{corollary:t_bound}
    For any $\gamma \in [0, 1]$, the total number of steps, $T$, satisfies
    \begin{equation}
        \mathbb{E}[T] \leq 2\alpha \left(\log \frac{1}{w(\gamma)} + 2 \log \frac{1}{\gamma} \right)+ 22, ~\text{ where }~ \alpha = \left(\log \frac{4}{3}\right)^{-1}
    \end{equation}
\end{corollary}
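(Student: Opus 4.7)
The plan is to combine the two preceding lemmas bounding $\mathbb{E}[N(\gamma)]$ and $\mathbb{E}[K(\gamma)]$ by simple addition. First I would note that, directly from the definition $K(\gamma) = \max\{0, T - N(\gamma)\}$, the pointwise bound $T \leq N(\gamma) + K(\gamma)$ holds without any probabilistic input: when $T \geq N(\gamma)$ it is an equality, and when $T < N(\gamma)$ we have $K(\gamma) = 0$ while $N(\gamma) > T$, so the inequality still goes through. Taking expectations and applying linearity yields $\mathbb{E}[T] \leq \mathbb{E}[N(\gamma)] + \mathbb{E}[K(\gamma)]$.

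Next I would substitute the bounds from \cref{lemma:n_bound} and \cref{lemma:k_bound} directly, giving
\begin{equation}
    \mathbb{E}[T] \leq \alpha \log \frac{1}{w(\gamma)} + 6 + \alpha\left(\log \frac{1}{\gamma} + \log \frac{1}{w(\gamma)}\right) + 16 = 2\alpha \log \frac{1}{w(\gamma)} + \alpha \log \frac{1}{\gamma} + 22.
\end{equation}
Since $\gamma \in [0,1]$ implies $\log(1/\gamma) \geq 0$, I can harmlessly inflate the coefficient on $\log(1/\gamma)$ from $\alpha$ to $4\alpha$, which rewrites the upper bound as $2\alpha\bigl(\log(1/w(\gamma)) + 2\log(1/\gamma)\bigr) + 22$, exactly matching the statement.

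There is essentially no obstacle at this point: all the technical work has already been absorbed into \cref{lemma:n_bound} and \cref{lemma:k_bound}, so the corollary is pure bookkeeping. The only conceptual ingredient is the pointwise decomposition $T \leq N(\gamma) + K(\gamma)$, which follows immediately from $K(\gamma)$ being a truncated difference, and nothing beyond linearity of expectation and the positivity of $\log(1/\gamma)$ on $[0,1]$ is required.
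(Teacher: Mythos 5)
Your proof is correct and follows essentially the same route as the paper: establish the pointwise bound $T \leq N(\gamma) + K(\gamma)$ directly from the truncated-difference definition of $K(\gamma)$, apply linearity of expectation, substitute the bounds from \cref{lemma:n_bound} and \cref{lemma:k_bound}, and loosen the coefficient on $\log(1/\gamma)$ using $\log(1/\gamma)\geq 0$. The brief case analysis you give for $T < N(\gamma)$ is a slightly more explicit justification of the pointwise inequality than the paper offers, but the argument is identical in substance.
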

\begin{proof}
    By the definition of $N(\gamma)$ and $K(\gamma)$, we have
    \begin{equation}
        T \leq N(\gamma) + K(\gamma) \implies \mathbb{E}[T] \leq \mathbb{E}\left[N(\gamma) + K(\gamma)\right],
    \end{equation}
    for all $\gamma \in [0, 1]$.
    From \cref{lemma:n_bound} and \cref{lemma:k_bound}, we have
    \begin{equation} \label{eq:looser}
        \mathbb{E}[T] \leq \alpha \left(2\log \frac{1}{w(\gamma)} + \log \frac{1}{\gamma} \right)+ 22 \leq 2\alpha \left(\log \frac{1}{w(\gamma)} + 2\log \frac{1}{\gamma} \right)+ 22,
    \end{equation}
    where $\alpha = \log (4/3)^{-1}$ as required.
\end{proof}


\begin{definition}[Bound functions $f, g, h$, worst-case width set $W^*$] \label{def:fghw}
    We define
    \begin{equation}
        f(\gamma, w) = \log \frac{1}{w(\gamma)},~ g(\gamma) = 2\log \frac{1}{\gamma} ~ \text{ and } ~ h(\gamma, w) = f(\gamma, w) + g(\gamma).
    \end{equation}
    For fixed $\rmax$, let $W(\rmax)$ be the set of all possible width functions.
    We define the set $W^*$ of worst-case width functions as
    \begin{equation}
        W^* = \left\{w^* \in W(\rmax) ~\Big|~ \inf_{\gamma'} h(\gamma', w^*) \geq \inf_{\gamma'} h(\gamma', w) ~\forall~ w \in W(\rmax) \right\}.
    \end{equation}
    We refer to members of this set as worst-case width functions.
\end{definition}


\begin{lemma}[An explicit worst case width function] \label{lemma:wcw}
    The function
    \begin{equation} \label{eq:w_form}
        \tilde{w}(\gamma) = \begin{cases}
            1 & \text{ for } 0 \leq \gamma \leq \tilde{\gamma} \\
            (\tilde{\gamma}/\gamma)^2 & \text{ for } \tilde{\gamma} < \gamma \leq 1
        \end{cases},
    \end{equation}
    where $\tilde{\gamma} = 1 - \sqrt{1 - \rmax^{-1}}$,
    is a width function and $\tilde{w} \in W(\rmax)$.
    Further, if $w \in W(\rmax)$ then
    \begin{equation}
        \inf_{\gamma} h(\gamma, \tilde{w}) \geq \inf_{\gamma} h(\gamma, w).
    \end{equation}
\end{lemma}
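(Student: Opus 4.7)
The plan is to reformulate $h$ into a cleaner form, verify that $\tilde{w}$ is genuinely realizable as a width function of some density ratio, and then use a contradiction argument against the integral constraint $\int_0^1 w = 1/\rmax$ from \cref{lemma:propw}.

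First I would rewrite $h(\gamma, w) = -\log\bigl( w(\gamma)\gamma^2 \bigr)$, so that the claim $\inf_\gamma h(\gamma, \tilde w) \geq \inf_\gamma h(\gamma, w)$ is equivalent to
\begin{equation}
    \sup_{\gamma \in [0,1]} \tilde w(\gamma)\gamma^2 \leq \sup_{\gamma \in [0,1]} w(\gamma)\gamma^2 \quad \text{for every } w \in W(\rmax).
\end{equation}
Direct evaluation from the definition \eqref{eq:w_form} gives $\tilde w(\gamma)\gamma^2 = \gamma^2 \leq \tilde\gamma^2$ on $[0,\tilde\gamma]$ and $\tilde w(\gamma)\gamma^2 = \tilde\gamma^2$ on $[\tilde\gamma, 1]$, so $\sup_\gamma \tilde w(\gamma)\gamma^2 = \tilde\gamma^2$. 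This is the ``flat plateau'' that makes $\tilde w$ extremal.

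Next I would verify $\tilde w \in W(\rmax)$, which requires two things: that $\tilde w$ has the integral value $1/\rmax$ forced by \cref{lemma:propw}, and that it really arises as the width function of some unimodal density ratio with $\sup r = \rmax$. For the integral, a short computation yields $\int_0^1 \tilde w = \tilde\gamma + \tilde\gamma^2(1/\tilde\gamma - 1) = 2\tilde\gamma - \tilde\gamma^2$, and the identity $(1-\tilde\gamma)^2 = 1 - 1/\rmax$ that defines $\tilde\gamma$ gives $2\tilde\gamma - \tilde\gamma^2 = 1/\rmax$. For realizability I would exhibit the explicit ratio $r(x) = \rmax$ for $x \in [0, \tilde\gamma^2]$ and $r(x) = \rmax\,\tilde\gamma/\sqrt{x}$ for $x \in (\tilde\gamma^2, 1]$; this $r$ is non-increasing hence unimodal, attains maximum $\rmax$, integrates to $1$ against Lebesgue measure on $[0,1]$ (a direct antiderivative check), and its $\gamma$-superlevel set is $\bigl[0, (\tilde\gamma/\gamma)^2 \bigr]$ for $\gamma > \tilde\gamma$ and $[0,1]$ otherwise, so its width function is exactly $\tilde w$.

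For the worst-case inequality, I would argue by contradiction. Suppose some $w \in W(\rmax)$ satisfies $\sup_\gamma w(\gamma)\gamma^2 < \tilde\gamma^2$. Then $w(\gamma) < \tilde\gamma^2/\gamma^2$ for every $\gamma \in (0,1]$. Combining this with the trivial bound $w(\gamma) \leq 1$ (widths of subsets of $[0,1]$ cannot exceed $1$) gives $w(\gamma) \leq \min\bigl(1, \tilde\gamma^2/\gamma^2\bigr) = \tilde w(\gamma)$ pointwise, with strict inequality on the whole interval $(\tilde\gamma, 1]$. Integrating would then yield $\int_0^1 w < \int_0^1 \tilde w = 1/\rmax$, contradicting \cref{lemma:propw} applied to $w$. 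Hence $\sup_\gamma w(\gamma)\gamma^2 \geq \tilde\gamma^2$ as needed.

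The main obstacle I anticipate is the realizability step: one has to be certain that the constraints encoded in $W(\rmax)$ (monotonicity, the integral identity, and $w(0) \leq 1$ arising from the unimodality of $r$ on $[0,1]$) are exactly the only constraints used in the contradiction argument, so that the extremal $\tilde w$ is not ruled out by some hidden constraint. Once the explicit density ratio above is on the table, both directions line up cleanly, and the rest is bookkeeping.
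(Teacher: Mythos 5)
Your proof is correct and essentially mirrors the paper's argument: both observe that any $w \in W(\rmax)$ is pointwise dominated by $\min(1, \gamma_m^2/\gamma^2)$ with $\gamma_m^2 = \sup_\gamma w(\gamma)\gamma^2$, and then invoke the integral constraint $\int_0^1 w\,d\gamma = 1/\rmax$ from Lemma~\ref{lemma:propw} to force $\gamma_m \geq \tilde{\gamma}$, with the same realizability witness $r(x) = \rmax\min\{1, \tilde{\gamma} x^{-1/2}\}$. Your contradiction framing together with the rewrite $h(\gamma,w) = -\log\bigl(w(\gamma)\gamma^2\bigr)$ is a slightly cleaner packaging that sidesteps the paper's intermediate-value-theorem construction of $\tilde{\gamma}$, but the underlying mechanism is identical.
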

\begin{proof}
    Suppose $w \in W(\rmax)$ and let
    \begin{equation}
        m = \inf_{\gamma} h(\gamma, \tilde{w}),
    \end{equation}
    let $\gamma_m$ be the point where $g$ equals $m$, that is
    \begin{equation}
        g(\gamma_m) = 2\log \frac{1}{\gamma_m} = m \implies \gamma_m = e^{-m/2}.
    \end{equation}
    Define $v : [0, 1] \times [0, 1] \to [0, 1]$ as
    \begin{equation}
        v(\gamma, \gamma') = \begin{cases}
            1 & 0 \leq \gamma \leq \gamma' \\
            (\gamma'/\gamma)^2 & \gamma' < \gamma \leq 1
        \end{cases},
    \end{equation}
    and consider $v(\gamma, \gamma_m)$ as a function of $\gamma$.
    Note that $v(\gamma, \gamma_m)$ may not be in $W(\rmax)$ because, while it is non-increasing and continuous, it may not integrate to $\rmax^{-1}$.
    In particular it holds that
    \begin{equation}
        h(\gamma, v(\gamma, \gamma_m)) \leq h(\gamma, \tilde{w}) ~\text{ for all }~ \gamma \in [0, 1] \implies
        v(\gamma, \gamma_m) \geq w(\gamma) \implies \int_0^1 v(\gamma, \gamma_m)~d\gamma \geq \rmax^{-1}.
    \end{equation}
    Now, note that
    \begin{equation}
        \int_0^1 v(\gamma, \gamma')~d\gamma = 2\gamma' - (\gamma')^2.
    \end{equation}
    By the intermediate value theorem, there exists some $0 < \tilde{\gamma} \leq \gamma_m$ such that $2\tilde{\gamma} - \tilde{\gamma}^2 = \rmax^{-1}$.
    For this $\tilde{\gamma}$, we define $\tilde{w}(\gamma) = v(\gamma, \tilde{\gamma})$, which is a width function because it is decreasing and integrates to $1$.
    Specifically, $\tilde{w}(\gamma)$ is in $W(\rmax)$ because the probability density function
    \begin{equation}
        q(x) = \rmax \min\left\{1, \tilde{\gamma} x^{-1/2}\right\},
    \end{equation}
    has $\tilde{w}(\gamma)$ as its width function.
    In addition note that $\tilde{\gamma} \leq \gamma_m$ so we have
    \begin{equation}
        \tilde{\gamma} \leq \gamma_m \implies \inf_\gamma h(\gamma, \tilde{w}(\gamma)) = \inf_\gamma h(\gamma, v(\gamma, \tilde{\gamma})) \geq \inf_\gamma h(\gamma, v(\gamma, \gamma_m)) = \inf_\gamma h(\gamma, w(\gamma)).
    \end{equation}
    Therefore it holds that
    \begin{equation} \label{eq:wtilde}
        w \in W(\rmax) \implies \inf_\gamma h(\gamma, \tilde{w}) \geq \inf_\gamma h(\gamma, w),
    \end{equation}
    from which it follows that $\tilde{w} \in W^*$ is a width function.
\end{proof}


\begin{theorem}[AS* runtime upper bound]
    Let $T$ be the total number of steps taken by AS* until it terminates.
    Then
    \begin{equation}
        \mathbb{E}[T] \leq 2\alpha \log \rmax + 2\alpha \log 2 + 22.
    \end{equation}
\end{theorem}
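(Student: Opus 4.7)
The plan is to instantiate the bound in Corollary \ref{corollary:t_bound} at the worst-case width function provided by Lemma \ref{lemma:wcw}, evaluate the resulting infimum explicitly using the piecewise formula for $\tilde{w}$, and finally convert everything to an expression involving only $\rmax$ via the explicit formula for $\tilde{\gamma}$.

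Since the inequality in Corollary \ref{corollary:t_bound} holds for every $\gamma \in [0,1]$, it passes to the infimum, giving $\mathbb{E}[T] \leq 2\alpha \inf_\gamma h(\gamma, w) + 22$, where $h$ is as in Definition \ref{def:fghw}. The true width function $w$ lies in $W(\rmax)$, so Lemma \ref{lemma:wcw} then upgrades this to $\mathbb{E}[T] \leq 2\alpha \inf_\gamma h(\gamma, \tilde{w}) + 22$, reducing the problem to analyzing the explicit function $\tilde{w}$. I would compute $\inf_\gamma h(\gamma,\tilde{w})$ directly from the two pieces of $\tilde{w}$: on $[0,\tilde{\gamma}]$ we have $\tilde{w}=1$, so $h(\gamma,\tilde{w}) = 2\log(1/\gamma)$, which is decreasing in $\gamma$ and attains its minimum $2\log(1/\tilde{\gamma})$ at the right endpoint; on $(\tilde{\gamma},1]$ the term $\log(1/\tilde{w}(\gamma)) = 2\log(\gamma/\tilde{\gamma})$ exactly cancels the $2\log(1/\gamma)$ contribution, leaving the constant value $2\log(1/\tilde{\gamma})$. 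Hence $\inf_\gamma h(\gamma,\tilde{w}) = 2\log(1/\tilde{\gamma})$.

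Next, I would translate this into a bound on $\rmax$ by exploiting the explicit formula $\tilde{\gamma} = 1 - \sqrt{1-\rmax^{-1}}$. Rationalizing,
\begin{equation*}
\tilde{\gamma} \;=\; \frac{\rmax^{-1}}{1 + \sqrt{1-\rmax^{-1}}} \;\geq\; \frac{\rmax^{-1}}{2},
\end{equation*}
so $\log(1/\tilde{\gamma}) \leq \log 2 + \log\rmax$, and chaining with the previous display yields a bound of the claimed form.

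I expect the main obstacle to be arithmetic bookkeeping rather than a conceptual hurdle: the proof of Corollary \ref{corollary:t_bound} actually produces a tighter intermediate bound $\alpha(2\log(1/w)+\log(1/\gamma))+22$ which is then relaxed to $2\alpha h + 22$, and one must be careful about which form to feed into Lemma \ref{lemma:wcw}, verifying that $\tilde{w}$ remains the relevant extremizer, in order to land exactly on the stated coefficients $2\alpha\log\rmax + 2\alpha\log 2 + 22$ rather than a looser multiple.
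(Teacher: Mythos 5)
Your route is essentially the paper's own: bound $\mathbb{E}[T]$ via Corollary \ref{corollary:t_bound}, pass to the infimum over $\gamma$, invoke Lemma \ref{lemma:wcw} to replace the true width function by the worst case $\tilde{w}$, evaluate, and translate via $\tilde{\gamma} \geq 1/(2\rmax)$. Your explicit computation of $\inf_\gamma h(\gamma,\tilde{w}) = 2\log(1/\tilde{\gamma})$ (noting $h$ is constant on $(\tilde{\gamma},1]$) is a clean way to see why the paper can simply evaluate at $\gamma=\tilde{\gamma}$, where $\tilde{w}(\tilde{\gamma})=1$; the two are equivalent.

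Your closing suspicion is correct and worth flagging harder: following either your argument or the paper's displayed chain, the final bound is
\begin{equation*}
\mathbb{E}[T] \leq 2\alpha \cdot 2\log\frac{1}{\tilde{\gamma}} + 22 \leq 4\alpha\log\rmax + 4\alpha\log 2 + 22,
\end{equation*}
which is exactly what the last line of the paper's own proof says, yet the theorem statement claims $2\alpha\log\rmax + 2\alpha\log 2 + 22$. So there is a genuine factor-of-two gap between what is proved and what is stated, and it is not merely bookkeeping you can absorb. Your instinct to instead feed the tighter intermediate bound $\alpha\bigl(2\log\frac{1}{w(\gamma)} + \log\frac{1}{\gamma}\bigr) + 22$ from the proof of Corollary \ref{corollary:t_bound} is the right fix, but then $\tilde{w}$ is no longer the relevant extremizer: the objective $h'(\gamma,w) = 2\log\frac{1}{w(\gamma)} + \log\frac{1}{\gamma}$ is flattened by the profile $v(\gamma,\gamma') = (\gamma'/\gamma)^{1/2}$, not $(\gamma'/\gamma)^2$, and the normalization $\int_0^1 v = \rmax^{-1}$ then gives $(\gamma')^{1/2} = 1 - \sqrt{1-\rmax^{-1}}$, whence $\inf_\gamma h'(\gamma,\cdot) = \log(1/\gamma') = 2\log\frac{1}{1-\sqrt{1-\rmax^{-1}}} \leq 2\log\rmax + 2\log 2$. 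Multiplying by $\alpha$ recovers the stated $2\alpha\log\rmax + 2\alpha\log 2 + 22$. So the missing step is a re-derivation of Lemma \ref{lemma:wcw} for the tighter $h'$, not just a careful pass through the existing lemma.
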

\begin{proof}
    Suppose AS* is applied to a target $Q$ and proposal $P$ with $\Dinf{Q}{P} = \rmax$, and corresponding width function $w \in W(\rmax)$.
    Now consider the worst case width function $\tilde{w}$ defined in \cref{lemma:wcw}, and note
    \begin{equation}
        \tilde{\gamma} = 1 - \sqrt{1 - \rmax^{-1}} \geq \frac{1}{2\rmax}.
    \end{equation}
    Then we have
    \begin{align}
        \mathbb{E}[T] &\leq 2\alpha \inf_{\gamma} h(\gamma, w) + 22 \leq 2\alpha \inf_{\gamma} h(\gamma, \tilde{w}) + 22 \leq 2\alpha h(\tilde{\gamma}, \tilde{w}) + 22,
    \end{align}
    and substituting the expression for $h$ we obtain
    \begin{align}
        \mathbb{E}[T] &\leq 2\alpha \left(\log \frac{1}{\tilde{w}(\tilde{\gamma})} + 2\log \frac{1}{\tilde{\gamma}} \right) + 22 \leq 4\alpha \log \rmax + 4\alpha \log 2 + 22,
    \end{align}
    arriving at the result.
\end{proof}

\bibliographystyle{unsrt}
\bibliography{references}

\begin{thebibliography}{1}

\bibitem{maddison2014sampling}
Chris~J Maddison, Daniel Tarlow, and Tom Minka.
\newblock A* sampling.
\newblock {\em Advances in Neural Information Processing Systems}, 27, 2014.

\bibitem{flamich2022fast}
Gergely Flamich, Stratis Markou, and Jos{\'e}~Miguel Hern{\'a}ndez-Lobato.
\newblock Fast relative entropy coding with a* coding.
\newblock {\em arXiv preprint arXiv:2201.12857}, 2022.

\bibitem{grimmett2014probability}
Geoffrey Grimmett and Dominic Welsh.
\newblock {\em Probability: an introduction}.
\newblock Oxford University Press, 2014.

\end{thebibliography}

\end{document}